
\documentclass[conference,a4paper]{IEEEtran}

\addtolength{\topmargin}{9mm}


\usepackage[utf8]{inputenc} 
\usepackage[T1]{fontenc}
\usepackage{url}              
\usepackage{cite}             
\usepackage{amsthm,amsfonts}
\usepackage[cmex10]{amsmath}  
\interdisplaylinepenalty=1000 
\usepackage{mleftright}       
\mleftright                   

\usepackage{graphicx}         
\usepackage{booktabs}         





\hyphenation{op-tical net-works semi-conduc-tor}

\usepackage{tikz}
\usepgflibrary{shapes.misc}
\usetikzlibrary{arrows,positioning,fit,backgrounds}
\usetikzlibrary{shapes.misc} 
\usetikzlibrary{calc}

\newtheorem{defn}{Definition}
\newtheorem{lem}{Lemma}
\newtheorem{thm}{Theorem}

\newtheorem{cor}[lem]{Corollary}
\newtheorem{rem}{Remark}

\DeclareMathOperator{\conv}{conv}

\begin{document}

\title{From Soft-Minoration to Information-Constrained Optimal Transport and Spiked Tensor Models} 

\author{%
\IEEEauthorblockN{Jingbo Liu}
\IEEEauthorblockA{Department of Statistics, University of Illinois, Urbana-Champaign\\
jingbol@illinois.edu}
}

\maketitle

\begin{abstract}
Let $P_Z$ be a given distribution on $\mathbb{R}^n$.
For any $y\in\mathbb{R}^n$, we may interpret $\rho(y):=\ln\mathbb{E}[e^{\left<y,Z\right>}]$ as a soft-max of $\left<y,Z\right>$. We explore lower bounds on $\mathbb{E}[\rho(Y)]$ in terms of the minimum mutual information $I(Z,\bar{Z})$ over $P_{Z\bar{Z}}$ which is a coupling of $P_Z$ and itself such that $Z-\bar{Z}$ is bounded in a certain sense.  This may be viewed as a soft version of Sudakov's minoration, which lower bounds the expected supremum of a stochastic process in terms of the packing number.
Our method is based on convex geometry (thrifty approximation of convex bodies), and works for general non-Gaussian $Y$.
When $Y$ is Gaussian and $\bar{Z}$ converges to $Z$,
this recovers a recent inequality of  Bai-Wu-Ozgur on information-constrained optimal transport, previously established using Gaussian-specific techniques. 
We also use soft-minoration to  obtain asymptotically (in tensor order) tight bounds on the free energy in the Sherrington-Kirkpatrick model with spins uniformly distributed on a type class, implying asymptotically tight bounds for the type~II error exponent in spiked tensor detection.
\end{abstract}

\section{Introduction}
Given $P_Z$ on $\mathbb{R}^n$, define
$
\rho(y):=\ln\mathbb{E}
[e^{\left<y,Z\right>}],
\quad y\in\mathbb{R}^n$,
where $Z\sim P_Z$ and $\left<,\right>$ denotes the inner product. 
We may interpret $\rho(y)$ as a soft-max of $\left<y,Z\right>$.
Indeed, if $P_Z$ is the uniform distribution on a compact set $\mathcal{A}$, then $\rho(y)\le \max_{z\in\mathcal{A}}\left<y,z\right>$.
Moreover, the inequality typically becomes tight when $y$ is large.
If $P_Y$ is the standard Gaussian distribution, Sudakov's minoration \cite{sudakov1969gaussian}\cite{ledoux2013probability} gives
\begin{align}
\mathbb{E}[\max_{z\in\mathcal{A}}\left<Y,z\right>]\ge c\sup_{l>0}\sqrt{\ln{\sf P}_l(\mathcal{A})}
\label{e2}
\end{align}
where $c>0$ is a universal constant and ${\sf P}_l(\mathcal{A})$ denotes the $l$-packing number of $\mathcal{A}$ under the Euclidean distance.
Generalization of Sudakov's minoration to other log-concave measures $P_Y$ has also been considered \cite{ledoux2013probability}\cite{latala2014sudakov}\cite{mendelson2019generalized}.
In this paper we explore inequalities of the following form which may be called ``soft minoration'':
\begin{align}
\mathbb{E}[\rho(Y)]\ge \textrm{function of }\inf I(Z;\bar{Z})
\label{e3}
\end{align}
where the inf is over coupling $P_{Z\bar{Z}}$ under which $Z-\bar{Z}$ is ``small'' and both $Z$ and $\bar{Z}$ have the same law $P_Z$.

One motivation for \eqref{e3} is network information theory. Cover's problem asks the minimum relay rate needed for achieving the maximum capacity of a relay channel \cite{cover} (see also \cite{WuBarnesOzgur}).
Measure concentration and reverse hypercontractivity  techniques yield nontrivial bounds but are not sufficient for solving Cover's problem 
\cite{wu2017improving}\cite{liu2019new}\cite{liu2020capacity}.
The solution in the Gaussian setting is infinity, as shown by  \cite{WuBarnesOzgur} using a rearrangement inequality for the spheres (see also a solution for binary symmetric channels using a similar idea \cite{BarnesWuOzgur-BSC}). 
Bai, Wu, and Ozgur \cite{bai} provided a simplified proof for the Gaussian setting by proving a bound on information constrained optimal transport: if $P_Y$ is the standard normal distribution in $\mathbb{R}^n$, $P_Z$ has well-defined differential entropy, and $R>0$, then
\begin{align}
\frac{n}{\tau_0(R/n)}\exp\left(\tfrac{h(Z)-{\sf h}_1}{n}\right)
\le
\sup_{\substack{\text{$P_{YZ}\in\Pi(P_Y,P_Z)$}\\
\text{$I(Y;Z)\le R$}}}\mathbb{E}[\left<Y,Z\right>]
\label{e4}
\end{align}
where ${\sf h}_1:=\frac{n}{2}\ln(2\pi e)$ (all information in  nats throughout the paper),
the function $\tau_0(\theta):=\frac1{\sqrt{1-e^{-2\theta}}}$, and $\Pi(P_Y,P_Z)$ denotes the set of  couplings between two distribution. 
\eqref{e4} generalizes Talagrand's $T_2$ inequality by replacing optimal transport with entropy-regularized optimal transport \cite{bai}.
Previous proofs of \eqref{e4}  relied on Gaussian specific arguments.
In contrast, \cite{el2022strengthened} used the traditional auxiliary random variable approach, yielding the same capacity region outer bound for the Gaussian setting as \cite{WuBarnesOzgur}\cite{bai},
and also showed that compress-and-forward solves Cover's problem for discrete memoryless channels under a full-rankness condition.
Concurrently,
\cite{liu2021soft}\cite{liu2022minoration} used convex geometry to lower bound $\mathbb{E}[\rho(Y)]$ with packing numbers of $\mathcal{A}$ (for $P_Z$ uniform on $\mathcal{A}$), showing the optimality of compress-and-forward for all discrete memoryless channels under conditions originally stated in \cite{cover} (without full-rankness assumption).

In \cite{liu2022minoration} the argument was restricted to $P_Z$ of the form of a uniform distribution on a set, which is sufficient for the solution of Cover's problem because when applied to the relay channel, $P_Z$ is the restriction of the channel output distribution on the intersection of a type class and a relay decoding set.
In this paper, we consider general $P_Z$, and the packing number of a set is replaced by the mutual information in \eqref{e3},
which is accomplished by combining the approach of \cite{liu2022minoration} with a tensorization argument.
Another difference from \cite{liu2022minoration} is that \cite{liu2022minoration} used the reduction of general $P_Y$ to the case of Rademacher distribution;
in contrast, this paper uses Barvinok's thrifty approximation of convex bodies \cite{barvinok2014thrifty} which entails explicit information theoretic bounds for general $P_Y$.

We show as a consequence of \eqref{e3} that
for any $l>0$,
\begin{align}
&\quad\inf_{\substack{\text{$P_{Z\bar{Z}}\in\Pi(P_Z,P_Z)$}\\
\text{$\mathbb{E}[\|Z-\bar{Z}\|^2]\le \frac1{4}\tau_0^2(\frac{R}{n})l^2n$}}}I(Z;\bar{Z})
\nonumber
\\
&\le 
n\ln\left(
1+\frac{2}{nl}\sup_{\substack{\text{$P_{YZ}\in\Pi(P_Y,P_Z)$}\\
\text{$I(Y;Z)\le R$}}}\mathbb{E}[YZ]
\right)
\label{e_4}
\end{align}
which implies 
\eqref{e4} as $l\to0$. 
As noted in \cite{cuturi2013sinkhorn}\cite{bai}, information constrained optimal transport is useful in machine learning due to the availability of fast algorithms.
In many such applications $P_Z$ is the empirical distribution of samples, in which case $h(Z)=-\infty$ and \eqref{e4} is useless. 
In contrast, a bound with $I(Z;\bar{Z})$ may still be nontrivial.
Moreover, our approach easily extends to general symmetric distribution $P_Y$, where $\tau_0(\cdot)$ is replaced by another universal function $\tau(\cdot)$ and ${\sf h}_1$ or the infimum in \eqref{e_4} is given a general definition.

In statistical physics, a central quantity is the (expected) free energy $\mathbb{E}[\ln\int e^{H(\sigma)}d\mu(\sigma)]$,
where $\sigma$ is called the spin or configuration, and the expectation is with respect to the randomness (also known as the disorder) of the Hamiltonian $H(\cdot)$ \cite{talagrand2010mean}\cite{mezard2009information}.
Clearly \eqref{e3} provides a lower bound on the free energy, once we embed $H$ and $\sigma$ in a suitable Euclidean space so that $H(\sigma)$ is an inner product.
The free energy in the Sherrington-Kirkpatrick model (SK) characterizes the information-theoretic threshold for spiked tensor detection \cite{chen2019phase}.
Statistical physics literature has been focusing on the cases of Rademacher and spherical spins, and existing exact formulae for the free energy generally rely on these structures and are usually hard to evaluate \cite{talagrand2006free}\cite{subag2017geometry}\cite{chen2019phase}\cite{mourrat2022parisi}.
We prove a simple dimension-free bound in the format of \eqref{e3} and show its tightness when the tensor order is large and the prior is uniform on a type class, 
which in turn provides asymptotically tight type II error exponent bounds in spiked tensor detection.

\section{Preliminaries}
The simplest proof of Sudakov's minoration \eqref{e2} in the case of Gaussian $P_Y$ is through Gaussian comparison (see e.g.\ \cite{chatterjee2005error}\cite{ledoux2013probability}).
However, this approach is Gaussian specific, and a longstanding goal in this research area is to extend the results to general log-concave measures (see \cite{mendelson2019generalized}).
For our proofs in Section~\ref{sec_general}-\ref{sec_gaussian}, it suffices to use the following result of Pajor \cite{pajort},
which can be viewed as a generalization of \eqref{e2} to the case of general $P_Y$ and packing number under the Minkowski functional distance. 
The proof is a simple application of the Alexandrov-Fenchel inequality, and a review of related convex geometry concepts can be found in \cite{liu2022minoration}.
\begin{lem}\cite{pajort}\label{lem_pajor}
Suppose that $\mathcal{C}$ is a symmetric convex body in $\mathbb{R}^N$, and let $P_Y$ be the associated cone volume measure.
Let $\mathcal{A}\subseteq\mathbb{R}^N$ be compact, and define $a:=\mathbb{E}[\sup_{z\in\mathcal{A}}\sup\left<z,Y\right>]$.
Let $\mathcal{C}^{\circ}$ be the polar of $\mathcal{C}$.
For any $l>0$, define ${\sf P}_l(\mathcal{A})$ as the $l$-packing number of $\mathcal{A}$ under the Minkowski function $\|\|_{\mathcal{C}^{\circ}}$ (which is a norm in the case of symmetric convex body $\mathcal{C}$).
Then
\begin{align}
{\sf P}_l(\mathcal{A})\le \left(1+2a/l\right)^N.
\label{e5}
\end{align}
\end{lem}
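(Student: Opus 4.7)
The plan is the classical packing-by-volume argument, refined by the Minkowski polynomial and the Alexandrov-Fenchel inequality to linearize the volume of a Minkowski sum, and then closed off by polar duality to identify the resulting mixed volume with $a$. First I would fix an optimal $l$-packing $z_1,\ldots,z_M$ of $\mathcal{A}$, so $M={\sf P}_l(\mathcal{A})$ and $\|z_i-z_j\|_{\mathcal{C}^\circ}\ge l$ for $i\ne j$. Since $\mathcal{C}^\circ$ is the closed unit ball of $\|\cdot\|_{\mathcal{C}^\circ}$, the translates $z_i+\tfrac{l}{2}\mathcal{C}^\circ$ have pairwise disjoint interiors and are all contained in $K+\tfrac{l}{2}\mathcal{C}^\circ$ with $K:=\conv\{z_1,\ldots,z_M\}\subseteq\conv(\mathcal{A})$, so comparing volumes yields
\begin{equation*}
M\cdot\vol\bigl(\tfrac{l}{2}\mathcal{C}^\circ\bigr)\le\vol\bigl(K+\tfrac{l}{2}\mathcal{C}^\circ\bigr).
\end{equation*}

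Next I would expand $\vol(K+t\mathcal{C}^\circ)$ as the Minkowski polynomial $\sum_{j=0}^{N}\binom{N}{j}V(K[j],\mathcal{C}^\circ[N-j])\,t^{N-j}$ and invoke Alexandrov-Fenchel, which asserts log-concavity of the sequence $j\mapsto V(K[j],\mathcal{C}^\circ[N-j])$ and hence gives the bound $V(K[j],\mathcal{C}^\circ[N-j])\le\vol(\mathcal{C}^\circ)^{1-j}V(K,\mathcal{C}^\circ[N-1])^{j}$. Substituting term-by-term and recognizing a binomial sum collapses the expansion to
\begin{equation*}
\vol(K+t\mathcal{C}^\circ)\le\vol(\mathcal{C}^\circ)\left(t+\frac{V(K,\mathcal{C}^\circ[N-1])}{\vol(\mathcal{C}^\circ)}\right)^{\!N}.
\end{equation*}
Setting $t=l/2$, combining with the packing inequality, and cancelling $(l/2)^N\vol(\mathcal{C}^\circ)$ yields $M\le\bigl(1+2V(K,\mathcal{C}^\circ[N-1])/(l\,\vol(\mathcal{C}^\circ))\bigr)^{N}$; since mixed volumes are monotone in each slot and $K\subseteq\conv(\mathcal{A})$, the same bound holds with $\conv(\mathcal{A})$ in place of $K$.

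The step I expect to be the main obstacle is the final identification $V(\conv(\mathcal{A}),\mathcal{C}^\circ[N-1])/\vol(\mathcal{C}^\circ)\le a$. Starting from the first-variation formula
\begin{equation*}
N\cdot V(\conv(\mathcal{A}),\mathcal{C}^\circ[N-1])=\int_{S^{N-1}} h_{\mathcal{A}}(u)\,dS_{\mathcal{C}^\circ}(u),
\end{equation*}
where $S_{\mathcal{C}^\circ}$ is the surface area measure of $\mathcal{C}^\circ$ and $h_{\mathcal{A}}(u)=\sup_{z\in\mathcal{A}}\langle z,u\rangle$ the support function, I would use the polar duality between $\mathcal{C}$ and $\mathcal{C}^\circ$ (through the Gauss map) to rewrite this integral as $N\vol(\mathcal{C}^\circ)\,\mathbb{E}[h_{\mathcal{A}}(Y)]=N\vol(\mathcal{C}^\circ)\,a$, with $Y\sim P_Y$ the cone volume measure of $\mathcal{C}$. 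Inserting this identification into the bound of the previous paragraph gives \eqref{e5}. The Alexandrov-Fenchel linearization and the disjoint-translates packing step are essentially mechanical; the delicate point is verifying that the specific normalization of the cone volume measure matches the polar identity exactly, so that no stray constants intrude in the passage from the mixed volume to the expectation~$a$.
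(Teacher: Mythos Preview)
The paper does not supply its own proof of this lemma; it attributes the result to Pajor and remarks only that ``the proof is a simple application of the Alexandrov--Fenchel inequality.'' Your proposal follows precisely that route---a disjoint-translates volume comparison, Minkowski polynomial expansion, Alexandrov--Fenchel to collapse the sum, and polar duality to identify the first mixed-volume ratio $V(\conv\mathcal{A},\mathcal{C}^\circ[N-1])/\vol(\mathcal{C}^\circ)$ with $a$---so the strategy is exactly the one the paper points to and is correct.

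One comment on the step you flagged as delicate: the identity you need is that $P_Y$, viewed on $\partial\mathcal{C}$, coincides with the pushforward of the (normalized) cone volume measure of $\mathcal{C}^\circ$ under the polar duality map $\partial\mathcal{C}^\circ\ni x\mapsto y\in\partial\mathcal{C}$ with $\langle x,y\rangle=1$. This is because $\langle x,\nu_{\mathcal{C}^\circ}(x)\rangle=\langle x,y/|y|\rangle=1/|y|$, so the cone-volume density on $\partial\mathcal{C}^\circ$ is exactly the weight that turns $\int_{\partial\mathcal{C}^\circ}h_{\mathcal{A}}(\nu_{\mathcal{C}^\circ}(x))\,d\mathcal{H}^{N-1}(x)$ into $N\vol(\mathcal{C}^\circ)\,\mathbb{E}[h_{\mathcal{A}}(Y)]$. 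If one instead took the cone volume measure of $\mathcal{C}$ itself on $\partial\mathcal{C}$, the constants would not match (e.g.\ for $\mathcal{C}=[-1,1]^2$), so your caution about normalization is well placed; the convention used here is the one referenced in \cite{liu2022minoration}.
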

\begin{rem}\label{rem_dred}
A difference between \eqref{e2} and \eqref{e5} is that the latter is dimension dependent.
It is possible however to use a Johnson-Lindenstrauss embedding argument to reduce $N$ in \eqref{e5} to the order of $\ln{\sf P}_l(\mathcal{A})$ and recover \eqref{e2}; see \cite{mendelson2019generalized}.
\end{rem}

Another key ingredient for the proofs in Section~\ref{sec_general} is to relate $\mathbb{E}[\rho(Y)]$ in \eqref{e3} to 
$\mathbb{E}[\max_{z\in\mathcal{A}}\left<Y,z\right>]$,
which is achieved by approximating the support of $P_Y$ in \eqref{e2} by a sparser set and then apply Markov's inequality and the union bound. 
More specifically, we use the following ``thrifty approximation of convex body'' by Barvinok \cite{barvinok2014thrifty}. 
We state here a simplified asymptotic version.
\begin{lem}\label{lem_barvinok}\cite{barvinok2014thrifty}
For any $\tau>1$, let $\kappa>0$ and $\theta>0$ be the solutions to 
\begin{align}
\frac{1+\kappa}{2\kappa}
h(\tfrac{\kappa}{1+\kappa})
&=\ln(\tau+\sqrt{\tau^2-1});
\label{e6}
\\
(1+\kappa)h(\tfrac{\kappa}{1+\kappa})&=\theta,
\label{e7}
\end{align}
where $h(\cdot)$ denotes the binary entropy function.
Then for any symmetric convex body $\mathcal{C}\subseteq \mathbb{R}^N$, 
there exists a symmetric polytope $P$ satisfying $P\subseteq C\subseteq \tau P$ and with at most $e^{\theta N+o(N)}$ vertices.
\end{lem}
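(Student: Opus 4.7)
The plan is to construct the polytope $P$ probabilistically and verify the two-sided inclusion via support functions combined with a union bound on the dual sphere. Writing $P=\conv(\pm v_1,\dots,\pm v_M)$, the inclusion $P\subseteq \mathcal{C}$ is automatic as long as each $v_i\in\mathcal{C}$, while by polar duality $\mathcal{C}\subseteq\tau P$ is equivalent to the following statement: for every unit vector $y$ of $\mathcal{C}^{\circ}$, some vertex satisfies $|\langle v_i,y\rangle|\ge 1/\tau$. The target $\tau$-approximation is thus reduced to a covering problem on the dual unit sphere by ``alignment half-spaces'' associated with the $v_i$.

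I would draw the vertices $v_1,\dots,v_M$ independently from the cone volume measure on $\partial\mathcal{C}$, which concentrates in directions where $\mathcal{C}$ extends far, so that $\langle V,y\rangle$ typically has magnitude of order $h_{\mathcal{C}}(y)=1$ on $y\in\partial\mathcal{C}^{\circ}$. The appearance of $\cosh^{-1}(\tau)=\ln(\tau+\sqrt{\tau^2-1})$ in \eqref{e6} is characteristic of a Cram\'er--Chernoff exponent for atypically large values of a symmetric Bernoulli-type sum; one expects an estimate of the form
\begin{equation*}
\Pr\bigl[|\langle V,y\rangle|\ge 1/\tau\bigr]\ge \exp\bigl(-\tfrac{2\kappa}{1+\kappa}\cosh^{-1}(\tau)\,N+o(N)\bigr)
\end{equation*}
when $V$ is realized as a \emph{sparse} combination of $\approx pN$ extreme points of $\mathcal{C}$ with density $p=\kappa/(1+\kappa)$. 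A union bound over an $\varepsilon$-net of the unit sphere in $\|\cdot\|_{\mathcal{C}^\circ}$, of cardinality $(c/\varepsilon)^N$, then shows that $M=\exp(\theta N+o(N))$ vertices suffice to guarantee $\mathcal{C}\subseteq\tau P$ with positive probability, with $\theta=(1+\kappa)h(\kappa/(1+\kappa))$ matching \eqref{e7} exactly. The discretization error in the net is absorbed by shrinking $\tau$ infinitesimally at the end.

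The main obstacle I anticipate is the precise identification of the sparsity density $p$ and the matching of the Chernoff rate with the combinatorial entropy cost. A naive continuous sampling from the cone measure delivers only order-of-magnitude estimates and loses the exact constant $\cosh^{-1}(\tau)$. The technical heart of Barvinok's construction is therefore combinatorial: vertices are taken as sparse sums of extreme points of $\mathcal{C}$ with density $p=\kappa/(1+\kappa)$, which saturates the Cram\'er bound $\cosh^{-1}(\tau)$ on one side and pays a combinatorial entropy $h(p)$ on the other. Optimizing the combined exponent over the free parameter $\kappa$ at a prescribed target $\tau$ yields the coupled equations \eqref{e6}--\eqref{e7} as the first-order optimality conditions.
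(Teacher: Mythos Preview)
The paper does not supply a proof of this lemma at all: it is quoted verbatim from Barvinok \cite{barvinok2014thrifty} as a black-box input (note the citation in the lemma header and the absence of any proof environment following it). There is therefore nothing in the present paper to compare your attempt against.

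As for your sketch on its own merits: the broad shape---probabilistic choice of vertices, reduction of $\mathcal{C}\subseteq\tau P$ via support functions to a covering of $\partial\mathcal{C}^{\circ}$, union bound over a net---is in the right spirit, and the recognition of $\ln(\tau+\sqrt{\tau^2-1})=\cosh^{-1}\tau$ as a Cram\'er--Chernoff exponent is a good observation. But the proposal is still a heuristic rather than a proof: you have not actually produced the random variable $V$ for which the claimed lower bound on $\Pr[|\langle V,y\rangle|\ge 1/\tau]$ holds uniformly over $y\in\partial\mathcal{C}^{\circ}$, nor explained why the ``sparse combination of $\approx pN$ extreme points of $\mathcal{C}$'' is even well defined for a general symmetric convex body (which need not have finitely many, or any, extreme points in the usual sense). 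Barvinok's actual argument first passes to a fine polytope approximation and then works with a carefully chosen random $\pm1$ combination of its vertices; the entropy factor $h(\kappa/(1+\kappa))$ arises from counting the sparse sign patterns, and the $\cosh^{-1}\tau$ from a precise moment-generating-function bound. If you want to turn your outline into a proof you would need to fill in exactly that construction and the matching large-deviation estimate; as written, the proposal identifies the right ingredients but does not assemble them.
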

\begin{rem}
From \eqref{e6}\eqref{e7}, if $\tau\to1$ then $\kappa=\frac{1+o(1)}{4\sqrt{2(\tau-1)}}\ln\frac1{\tau-1}$ and $\theta=\frac{1+o(1)}{2}\ln\frac1{\tau-1}$.
If $\tau\to\infty$ then $\kappa=\frac{1+o(1)}{\tau^2}$ and $\theta=\frac{2+o(1)}{\tau^2}\ln \tau$. 
\end{rem}

\section{General $P_Y$}\label{sec_general}
In this section we derive a bound in the form of \eqref{e3} which, among other things, generalizes \eqref{e4} to arbitrary $P_Y$ satisfying $P_Y=P_{-Y}$ (see Corollary~\ref{cor4}).
To be precise, $\tau_0(R/n)$ in \eqref{e4} will be replaced by a worse constant for general $P_Y$; we will explain in the next section how the constant is improved to $\tau_0(R/n)$ for Gaussian $P_Y$.

Let $\theta(\tau)$ be the function defined implicitly in \eqref{e6}-\eqref{e7}. 
\begin{thm}\label{thm_main}
Suppose that $P_Y$ and $P_Z$ are distributions on $\mathbb{R}^n$, $\mathbb{E}[Y]=0$, and $Y$ and $-Y$ have the same distribution, where $Y\sim P_Y$.
Then for any $l>0$ and $\tau>1$,
\begin{align}
\inf_{P_{\bar{Z}Z}}I(\bar{Z};Z)
\le n\ln\left(1+\frac{2}{ln}\left(\mathbb{E}[\rho(Y)]+n\theta(\tau)\right)\right)
\label{e22}
\end{align}
where the infimum is over all $P_{\bar{Z}Z}\in \Pi(P_{Z},P_Z)$ satisfying $\mathbb{E}[\left<\bar{Z}-Z,Y\right>]\le \frac{\tau ln}{2}$ for all $P_{\bar{Z}ZY}\in\Pi(P_{\bar{Z}Z},P_Y)$.
\eqref{e22} also holds if $\mathbb{E}[\rho(Y)]$ is replaced by $\sup_{P_{YZ}\in\Pi(P_Y,P_Z)}
\left\{-I(Y;Z)+\mathbb{E}[\left<Y,Z\right>]
\right\}$.
\end{thm}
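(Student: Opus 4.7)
My plan is to combine Barvinok's thrifty approximation of convex bodies (Lemma~\ref{lem_barvinok}) with Pajor's packing-number inequality (Lemma~\ref{lem_pajor}) and a packing-to-coupling construction of rate-distortion type.

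First I would reduce to the stronger statement in which $\mathbb{E}[\rho(Y)]$ is replaced by $a:=\sup_{P_{YZ}\in\Pi(P_Y,P_Z)}\{\mathbb{E}[\langle Y,Z\rangle]-I(Y;Z)\}$. Indeed, the Donsker--Varadhan representation $\rho(y)=\sup_Q\{\mathbb{E}_Q[\langle y,Z\rangle]-D(Q\|P_Z)\}$, combined with the identity $\mathbb{E}_{P_Y}[D(P_{Z|Y}\|P_Z)]=I(Y;Z)$ valid for any $P_{YZ}\in\Pi(P_Y,P_Z)$, gives $\mathbb{E}[\rho(Y)]\ge a$, so proving the inequality with $a$ on the right-hand side implies the version stated for $\mathbb{E}[\rho(Y)]$.

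Next I would apply Lemma~\ref{lem_barvinok} to a symmetric convex body $\mathcal{C}\subseteq\mathbb{R}^n$ naturally attached to $P_Y$ (for instance, one whose cone-volume measure controls $P_Y$ so that Pajor's lemma can later be invoked with parameter $a$). This produces a symmetric polytope $P=\conv(\pm v_1,\dots,\pm v_K)$ with $K\le e^{\theta(\tau)n+o(n)}$ vertices and $P\subseteq\mathcal{C}\subseteq \tau P$. Using the pointwise bound $\langle y,z\rangle\le \tau\max_{k,\epsilon}\epsilon\langle v_k,z\rangle$ for $y\in\tau P$, together with Markov's inequality and a union bound over the $2K$ atoms as described after Lemma~\ref{lem_barvinok}, I would convert the soft-max functional $\mathbb{E}[\rho(Y)]$ into an estimate on $\mathbb{E}[\sup_{z\in\mathcal{A}}\langle Y,z\rangle]$ of the form $a+n\theta(\tau)+o(n)$, with $\mathcal{A}$ a compact set capturing the essential support of $P_Z$. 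This is the step in which Barvinok's polytope count enters as the additive $n\theta(\tau)$ term. Then I would invoke Pajor's Lemma~\ref{lem_pajor} with this $\mathcal{C}$ and $\mathcal{A}$ to obtain, for any $l>0$, the packing-number estimate
\begin{equation*}
{\sf P}_l(\mathcal{A})\le\left(1+\tfrac{2(a+n\theta(\tau))}{ln}\right)^{n},
\end{equation*}
the packing being taken in the Minkowski norm $\|\cdot\|_{\mathcal{C}^{\circ}}$. Finally, I would turn a maximal $l$-packing $\{z_1,\dots,z_N\}$ into a coupling $P_{\bar{Z}Z}\in\Pi(P_Z,P_Z)$ by mapping $Z$ to the index $q(Z)$ of its nearest $z_i$ and drawing $\bar{Z}$ independently from $P_Z$ conditioned on the same Voronoi cell. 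This keeps both marginals equal to $P_Z$, yields $I(\bar{Z};Z)\le H(q(Z))\le \ln N$ via the Markov chain $Z-q(Z)-\bar{Z}$, and provides a uniform $\|\cdot\|_{\mathcal{C}^{\circ}}$-bound on $\bar{Z}-Z$, which by duality with $\mathcal{C}$ translates into $\sup_{P_{\bar{Z}ZY}\in\Pi(P_{\bar{Z}Z},P_Y)}\mathbb{E}[\langle \bar{Z}-Z,Y\rangle]\le \tfrac{\tau ln}{2}$, exactly the constraint under the infimum in~\eqref{e22}.

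The step I expect to be most delicate is this last coupling construction: one simultaneously needs the cell-resampling to preserve the marginal $P_Z$ on both sides (forcing a re-sampling rather than a deterministic quantizer), the resulting geometric distortion bound to land in precisely the dual Minkowski norm that pairs with $\mathcal{C}$ so as to control the supremum over all extensions $P_{\bar{Z}ZY}$, and the Barvinok--Pajor constants to combine cleanly into the factor $1+\tfrac{2}{ln}(\mathbb{E}[\rho(Y)]+n\theta(\tau))$. A secondary difficulty is the Markov/union-bound passage, which must be executed so that the discretization tail from the $2K$ polytope vertices is absorbed into the $n\theta(\tau)$ term without deteriorating the leading constant.
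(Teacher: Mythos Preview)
Your outline has the right ingredients---Donsker--Varadhan, Barvinok, Pajor, and a packing-to-coupling step---but it is missing the central idea of the paper's proof: \emph{tensorization via type classes}. The paper does not work in $\mathbb{R}^n$ at all; it first reduces to finitely supported $P_Y,P_Z$ with rational masses, then lifts to $\mathbb{R}^{nN}$ by taking $\mathcal{C}$ and $\mathcal{A}$ to be the $P_Y$- and $P_Z$-type classes, and only applies Barvinok and Pajor in that $nN$-dimensional space. The information quantities $I(Y;Z)$ and $I(\bar Z;Z)$ then emerge not from a Voronoi coupling but from the method of types/large deviations (equations \eqref{e25} and \eqref{e34}--\eqref{e35}), and the limit $N\to\infty$ is what makes the inequality exact.

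Without this lift, several of your steps do not go through. First, the phrase ``a symmetric convex body $\mathcal{C}$ naturally attached to $P_Y$ (for instance, one whose cone-volume measure controls $P_Y$)'' hides a real obstruction: for a generic symmetric $P_Y$ on $\mathbb{R}^n$ there is no canonical body with cone-volume measure $P_Y$ and good Barvinok-approximation constants; the paper sidesteps this entirely by taking $\mathcal{C}$ to be the convex hull of the $P_Y$-type class, whose permutation symmetry makes $\tilde\rho$ constant on it (used at \eqref{e31}). Second, Lemma~\ref{lem_barvinok} is stated asymptotically as $e^{\theta(\tau)N+o(N)}$ vertices; for fixed $n$ the $o(n)$ term does not disappear, whereas after dividing by $N$ and sending $N\to\infty$ it does. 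Third, your Markov/union-bound step aims to pass from the soft-max $\rho$ to a hard-max, but $\rho(y)\le\sup_{z}\langle y,z\rangle$ always---the nontrivial direction requires carving out a high-probability subset $\mathcal{B}$ of the type class $\mathcal{A}$ (equations \eqref{e29}--\eqref{e31}), which again relies on the tensorized structure. Finally, your Voronoi-resampling coupling is a reasonable heuristic, but in the paper the left side of \eqref{e22} is obtained by computing the cardinality of $(z^N+\tfrac{\tau nNl}{2}\mathcal{C}^\circ)\cap\mathcal{A}$ via large deviations, which directly yields $N\inf I(\bar Z;Z)+o(N)$ with exactly the coupling constraint stated in the theorem.
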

\begin{proof}
It suffices to prove the case where $P_Y$ and $P_Z$ are supported on  finite sets and all the probability masses are rational numbers. 
The general case can then be established by an approximation argument, using the fact that the mutual information can be arbitrarily well approximated with finite partitions of the space \cite{pinsker}.
For any $N>0$ which divides the denominators of these rational numbers, 
let $P_{\tilde{Y}^N}$ (resp.\ $P_{\tilde{Z}^N}$) be the equiprobable distribution on $\mathcal{C}$ (resp.\ $\mathcal{A}$), defined as the $P_Y$-type class (resp.\ $P_Z$-type class).
For any $y^N\in\mathbb{R}^{nN}$, define 
\begin{align}
\tilde{\rho}(y^N):=
\ln\mathbb{E}[e^{\left<y^N,\tilde{Z}^N\right>}]
\end{align}
where $\tilde{Z}^N\sim P_{\tilde{Z}^N}$.
Then by the method of types and large deviation analysis we have
\begin{align}
\mathbb{E}[\tilde{\rho}(\tilde{Y}^N)]
&=N\sup_{P_{YZ}\in\Pi(P_Y,P_Z)}
\left\{-I(Y;Z)+\mathbb{E}[\left<Y,Z\right>]
\right\}
+o(N)
\label{e25}
\\
&\le N\mathbb{E}[\rho(Y)]+o(N)
\label{e_25}
\end{align}
where $(Y,Z)\sim P_{YZ}$ in \eqref{e25},
and \eqref{e_25} follows since 
by the Donsker-Varadhan variational formula,  
$\rho(y)
=
\sup_{Q_Z}\left\{\mathbb{E}_{Q_Z}[\left<y,Z\right>]-D(Q_Z\|P_Z)
\right\}$ for any $y$ and therefore
\begin{align}
\mathbb{E}[\rho(Y)]\ge
\sup_{P_{YZ}\in\Pi(P_Y,P_Z)}
\left\{-I(Y;Z)+\mathbb{E}[\left<Y,Z\right>]
\right\}.\label{e26}
\end{align}
By Lemma~\ref{lem_barvinok}, we can choose $\mathcal{S}$ as a subset of the convex hull of $\mathcal{C}$ such that 
\begin{align}
\ln|\mathcal{S}|&=nN\theta(\tau)+o(N);
\label{e13}
\\
\mathcal{S}^{\circ}&\subseteq \tau\mathcal{C}^{\circ}.
\label{e27}
\end{align}
Let $\hat{Y}^N$ be equiprobable on $\mathcal{S}$.
Define $\mathcal{B}$ a subset of $\mathbb{R}^{nN}$ as
\begin{align}
\bigcap_{y^N\in\mathcal{S}}\{z\in\mathcal{A}\colon \left<y^N,z^N\right>\le
\mathbb{E}[\left<y^N,\tilde{Z}^N\right>]
+\tilde{\rho}(y^N)+\ln(2|\mathcal{S}|)\}.
\end{align}
Then by Markov's inequality we have 
\begin{align}
P_{\tilde{Z}^N}[\mathcal{B}]
\ge \frac1{2},
\label{e29}
\end{align}
and moreover,
\begin{align}
\mathbb{E}[\sup_{z^N\in\mathcal{B}}
\left<\hat{Y}^N,z^N\right>]
&\le \mathbb{E}[\tilde{\rho}(\hat{Y}^N)]
+\ln(2|\mathcal{S}|)
\\
&\le 
\mathbb{E}[\tilde{\rho}(\tilde{Y}^N)]
+\ln(2|\mathcal{S}|)
\label{e31}
\end{align}
where we used $\mathbb{E}[\hat{Y}^N]=0$,
and the fact that $\tilde{\rho}(\cdot)$ is a constant on $\mathcal{C}$ by permutation invariance of the type class.
Now let ${\sf P}_{\tau nNl}(\mathcal{B})$ be the $\tau nNl$-packing number under $\|\|_{\mathcal{C}^{\circ}}$,
which is upper bounded by the $nNl$-packing number $\|\|_{\mathcal{S}^{\circ}}$ by \eqref{e27}.
Therefore by Pajor Lemma~\ref{lem_pajor},
\begin{align}
\ln{\sf P}_{\tau nNl}(\mathcal{B})
&\le 
nN\ln\left(1+\frac{2}{nNl}\mathbb{E}\left[\sup_{z^N\in\mathcal{B}}
\left<\hat{Y}^N,z^N\right>\right]\right).
\label{e32}
\end{align}
For any $z^N\in\mathcal{A}$, the set $(z^N+\frac{\tau nNl}{2}\mathcal{C}^{\circ})\cap \mathcal{A}$ is 
\begin{align}
\left\{\bar{z}^N\in\mathcal{A}\colon \left<\bar{z}^N-z^N,y^N
\right>\le \frac{\tau nNl}{2},\textrm{if } y^N \textrm{ is $P_Y$-type}
\right\},
\label{e33}
\end{align}
whose $\ln$ cardinality is, by large deviation analysis,
\begin{align}
N\sup_{P_{\bar{Z}Z}}H(\bar{Z}|Z)
+o(N),
\label{e34}
\end{align}
where the supremum is over the same set as the infimum in \eqref{e22}.
Note that the packing number can be lower bounded by $|\mathcal{B}|$ divided by the cardinality of the set in \eqref{e33};
using 
\eqref{e29} and \eqref{e34} we have
\begin{align}
\ln{\sf P}_{\tau nNl}(\mathcal{B})
\ge N\inf_{P_{\bar{Z}Z}}I(\bar{Z};Z)
+o(N).\label{e35}
\end{align}
The theorem follows by  \eqref{e_25}\eqref{e31}\eqref{e32}\eqref{e35} and taking $N$ large.
\end{proof}
Next, we consider a limiting case of Theorem~\ref{thm_main} as $l\to0$.
\begin{defn}\label{defn1}
Fix $P_Y$ a distribution on $\mathbb{R}^n$.
For any $L>0$ define 
\begin{align}
{\sf h}_L=\sup_{P_X}h(X).\end{align}
where the supremum is over all $P_X$ satisfying $\sup_{P_{XY}\in\Pi(P_X,P_Y)}\mathbb{E}[\left<X,Y\right>]\le nL$.
\end{defn}

\begin{rem}\label{rem1}
Since $h(LX)=h(X)+n\ln L$, we see that ${\sf h}_L={\sf h}_1+n\ln L$.
Moreover, if $Y$ is standard Gaussian then the supremum is achieved when $X=LY$ by Talagrand's inequality (special case of \eqref{e4} when $R\to\infty$), and therefore ${\sf h}_L=\frac{n}{2}\ln(2\pi eL^2)$.
\end{rem}

\begin{cor}\label{cor3}
Suppose that $P_Y$ and $P_Z$ are as in Theorem~\ref{thm_main}, and additionally, $P_Z$ has well defined differential entropy.
Let ${\sf h}_1$ be as in Definition~\ref{defn1}. Then
\begin{align}
h(Z)
\le {\sf h}_1+n\inf_{\tau>1}\ln\left(
\frac{\tau}{n}\mathbb{E}[\rho(Y)]+\tau\theta(\tau)
\right).
\end{align}
\end{cor}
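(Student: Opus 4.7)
The plan is to derive Corollary~\ref{cor3} as the $l\downarrow 0$ limit of Theorem~\ref{thm_main}, using the chain rule to trade the mutual information in \eqref{e22} against the differential entropy of the displacement $\bar Z-Z$, which is precisely controlled by Definition~\ref{defn1}.

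First I would fix an arbitrary coupling $P_{Z\bar Z}\in\Pi(P_Z,P_Z)$ admissible in Theorem~\ref{thm_main} and set $X:=\bar Z-Z$. The Theorem's constraint says $\mathbb{E}[\langle X,Y\rangle]\le \tau ln/2$ for every $P_{XY}\in\Pi(P_X,P_Y)$, which is exactly the feasibility condition for $P_X$ in Definition~\ref{defn1} with $L=\tau l/2$; Remark~\ref{rem1} then yields
\begin{equation*}
h(X)\le {\sf h}_{\tau l/2}={\sf h}_1+n\ln(\tau l/2).
\end{equation*}
Next I would invoke the chain rule together with translation invariance of (conditional) differential entropy (valid because $h(Z)$ is assumed well-defined):
\begin{equation*}
h(Z)=I(Z;\bar Z)+h(Z\mid\bar Z)=I(Z;\bar Z)+h(-X\mid\bar Z)\le I(Z;\bar Z)+h(X).
\end{equation*}
Taking the infimum over admissible $P_{Z\bar Z}$ on the right and applying Theorem~\ref{thm_main}, then collapsing via $\ln(1+a)+\ln b=\ln(b+ab)$, I obtain
\begin{equation*}
h(Z)\le {\sf h}_1+n\ln\!\left(\tfrac{\tau l}{2}+\tfrac{\tau}{n}\mathbb{E}[\rho(Y)]+\tau\theta(\tau)\right).
\end{equation*}
Since this bound holds for every $l>0$, sending $l\downarrow 0$ kills the $\tau l/2$ term, and taking the infimum over $\tau>1$ produces the claimed inequality.

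The main obstacle is not any single hard step but the careful unpacking of Definition~\ref{defn1}: one must recognize that the Theorem~\ref{thm_main} constraint, which quantifies over \emph{all} joint couplings of $(\bar Z,Z)$ with $Y$, is exactly the constraint entering ${\sf h}_{L}$ once one identifies $X=\bar Z-Z$ and $L=\tau l/2$. After that identification, the remainder is bookkeeping: the admissible set is non-empty (take $\bar Z=Z$ a.s., so $X=0$), the conditional-entropy translation identity is standard, and the collapse of the logarithm is algebraic.
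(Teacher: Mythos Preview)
Your proposal is correct and follows essentially the same route as the paper: both lower-bound $I(Z;\bar Z)$ via the chain rule by $h(Z)-h(\bar Z-Z)$, use the Theorem~\ref{thm_main} coupling constraint to cap $h(\bar Z-Z)$ by ${\sf h}_{\tau l/2}={\sf h}_1+n\ln(\tau l/2)$ from Definition~\ref{defn1}/Remark~\ref{rem1}, substitute into \eqref{e22}, and send $l\downarrow 0$. Your added remark that the admissible set is nonempty (via $\bar Z=Z$) is a useful sanity check the paper leaves implicit.
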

\begin{proof}
Using Remark~\ref{rem1} we have
\begin{align}
I(\bar{Z};Z)
&=h(Z)-h(Z-\bar{Z}|\bar{Z})
\\
&\ge h(Z)-h(Z-\bar{Z})
\\
&\ge h(Z)-{\sf h}_{\frac{\tau ln}{2}}
\\
&=h(Z)-{\sf h}_1-n\ln \frac{\tau ln}{2}.
\end{align}
Then corollary follows by taking $l\to 0$ in \eqref{e22}.
\end{proof}

\begin{rem}\label{rem2}
From \eqref{e25}-\eqref{e_25} we see that the bounds in Theorem~\ref{thm_main} and Corollary~\ref{cor3}
can be sharpened by replacing $\mathbb{E}[\rho(Y)]$ by 
the right side of \eqref{e26}.
It might appear that this is a strict improvement, but actually it is just an equivalent version since the converse implication is also true.
Indeed, the fact that these bounds with $\mathbb{E}[\rho(Y)]$ holding for all $P_Z$ and $n$ implies the sharpened versions with the right side of \eqref{e26}, using a similar tensorization argument as in \eqref{e25}-\eqref{e_25}.
\end{rem}

An equivalent form of \eqref{cor3} is the following:
\begin{cor}\label{cor4}
Let $P_Y$ and $P_Z$ be as in Corollary~\ref{cor3}.
Denote $\tau(\cdot)$ as the inverse function of $\theta(\cdot)$. For any $R$, we have
\begin{align}
\sup_{P_{YZ}}\mathbb{E}[\left<Y,Z\right>]
\ge 
\frac{n}{\tau(R/n)}\exp\left(\frac{h(Z)-{\sf h}_1}{n}\right)
\end{align}
where the supremum is over $P_{YZ}\in\Pi(P_Y,P_Z)$ satisfying $I(Y;Z)\le R$.
\end{cor}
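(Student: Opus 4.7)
The plan is to derive Corollary~\ref{cor4} from Corollary~\ref{cor3} by a rescaling-and-duality argument, establishing the claimed equivalence. First, I would apply the sharpened form of Corollary~\ref{cor3} (from Remark~\ref{rem2}, in which $\mathbb{E}[\rho(Y)]$ is replaced by $\sup_{P_{YZ}\in\Pi(P_Y,P_Z)}\{\mathbb{E}[\langle Y,Z\rangle]-I(Y;Z)\}$) to the rescaled distribution $P_{\beta Z}$ for a parameter $\beta>0$. Using $h(\beta Z)=h(Z)+n\ln\beta$, $I(Y;\beta Z)=I(Y;Z)$, and the obvious bijection between couplings of $(P_Y,P_{\beta Z})$ and of $(P_Y,P_Z)$, this produces, for every $\beta>0$ and $\tau>1$,
\begin{equation*}
\beta\,e^{(h(Z)-{\sf h}_1)/n}\le\frac{\tau}{n}\,F(\beta)+\tau\theta(\tau),
\end{equation*}
where $F(\beta):=\sup_{P_{YZ}\in\Pi(P_Y,P_Z)}\{\beta\,\mathbb{E}[\langle Y,Z\rangle]-I(Y;Z)\}$.

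Second, I would identify $F$ as the concave Legendre dual of $V(R):=\sup_{P_{YZ}\in\Pi(P_Y,P_Z),\,I(Y;Z)\le R}\mathbb{E}[\langle Y,Z\rangle]$. Partitioning couplings by the value of $I(Y;Z)$ yields $F(\beta)=\sup_{R\ge 0}[\beta V(R)-R]$. The function $V$ is nondecreasing and concave in $R$: for fixed marginals $I(Y;Z)=D(P_{YZ}\|P_Y\otimes P_Z)$ is convex in $P_{YZ}$, so the constraint set is convex, while the objective is linear. At any $R$ with $V(R)<V(\infty)$, concavity supplies a positive supergradient $\lambda$; taking $\beta:=1/\lambda$ makes the supremum defining $F(\beta)$ be attained at $R$, so $F(\beta)=\beta V(R)-R$. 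Choosing in addition $\tau:=\tau(R/n)$ so that $\tau\theta(\tau)=\tau R/n$, the two $R$-dependent terms on the right-hand side cancel, collapsing the bound to $V(R)\ge(n/\tau(R/n))\,e^{(h(Z)-{\sf h}_1)/n}$, as required.

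The boundary case $V(R)=V(\infty)$ is disposed of directly: Definition~\ref{defn1} together with Remark~\ref{rem1} imply $V(\infty)\ge n\,e^{(h(Z)-{\sf h}_1)/n}$, which already dominates the target lower bound because $\tau(R/n)>1$. The main obstacle will be the clean execution of the Fenchel--Legendre step---specifically, verifying the concavity of $V$ and the existence of a positive supergradient at the chosen $R$---since Corollary~\ref{cor3} combined with Remark~\ref{rem2} has already done all the heavy analytic work, and the passage to Corollary~\ref{cor4} then amounts to a change-of-variables.
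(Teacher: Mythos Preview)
Your proposal is correct and follows essentially the same route as the paper: rescale $Z$ by a parameter (the paper's $\lambda$ is your $\beta$), apply the sharpened Corollary~\ref{cor3} from Remark~\ref{rem2}, and then choose $\tau=\tau(R/n)$ so that the $-I(Y;Z)$ term cancels against $\tau\theta(\tau)$. The paper simply asserts the existence of $\lambda>0$ for which the optimizer of $\sup\{-I(Y;Z)+\lambda\mathbb{E}[\langle Y,Z\rangle]\}$ has $I(Y;Z)=R$, whereas you supply the convex-analytic justification (concavity of $V$, existence of a positive supergradient when $V(R)<V(\infty)$, and a separate treatment of the boundary case via Definition~\ref{defn1}); this extra care is a genuine, if minor, improvement in rigor over the paper's argument.
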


\begin{proof}
Using Corollary~\ref{cor3} and Remark~\ref{rem2} we obtain
\begin{align}
&\exp(\tfrac{h(Z)-{\sf h}_1}{n})
\le \nonumber
\\
&\inf_{\theta>0}\left(
\tfrac{\tau(\theta)}{n}
\sup_{P_{YZ}\in\Pi(P_Y,P_Z)}
\left\{-I(Y;Z)+\mathbb{E}[\left<Y,Z\right>]\right\}
+\tau(\theta)\theta
\right).
\label{e43}
\end{align}
Let $\lambda>0$ be such that the $P_{YZ}$ achieving
\begin{align}
\sup_{P_{YZ}\in\Pi(P_Y,P_Z)}
\left\{-I(Y;Z)+\lambda\mathbb{E}[\left<Y,Z\right>]\right\}
\label{e44}
\end{align}
ensures that $I(Y;Z)=R$.
Make the substitution $Z\leftarrow \lambda Z$ in \eqref{e43},
and let $P_{YZ}$ be a optimal coupling for \eqref{e44}.
We have
\begin{align}
&\quad \lambda\exp(\tfrac{h( Z)-{\sf h}_1}{n})
 \nonumber
\\
&=\exp(\tfrac{h(\lambda Z)-{\sf h}_1}{n})
\\
&\le\inf_{\theta>0}\left(
\tfrac{\tau(\theta)}{n}
\left\{-I(Y;Z)+\lambda\mathbb{E}[\left<Y,Z\right>]\right\}
+\tau(\theta)\theta
\right)
\\
&\le \frac{\tau(\frac1{n}I(Y;Z))}{n}\cdot \lambda \mathbb{E}[\left<Y,Z\right>]
\end{align}
which establishes the claim.
\end{proof}

\section{The Gaussian Case}\label{sec_gaussian}
For Gaussian $P_Y$, we can improve the estimates in Section~\ref{sec_general} by replacing Lemma~\ref{lem_barvinok} with the following sharp estimate,
which follows from sphere covering (e.g.\ \cite{boroczky2003covering}).
\begin{lem}\label{lem5}
Fix $\phi\in(0,\frac{\pi}{2})$. 
For any positive integer $N$ there exists a set $\mathcal{S}_N$ on the unit ball $B_1^N$ satisfying $|\mathcal{S}_N|=\frac1{\sin^N(\phi+o(1))}$
and $\frac1{\cos\phi}\conv(\mathcal{S}_N)\supseteq B_1^N$.
\end{lem}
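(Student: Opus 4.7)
The plan is to dualize the stated containment into a spherical covering condition and then apply a sharp covering estimate for $S^{N-1}$.

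First, I would translate the geometric requirement using the support function. The containment $\frac{1}{\cos\phi}\conv(\mathcal{S}_N)\supseteq B_1^N$ is equivalent to the pointwise support-function inequality $\max_{s\in\mathcal{S}_N}\langle u,s\rangle\ge\cos\phi$ for every $u\in S^{N-1}$, i.e., the spherical caps $\{u\in S^{N-1}\colon\langle u,s\rangle\ge\cos\phi\}$ of angular radius $\phi$, centered at the points $s\in\mathcal{S}_N$, cover $S^{N-1}$. So it suffices to construct a $\phi$-net on the sphere in the angular metric with the required cardinality, and I may clearly assume without loss of generality that $\mathcal{S}_N\subseteq S^{N-1}$.

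Second, I would invoke a sharp sphere-covering theorem of Rogers / Böröczky type (\cite{boroczky2003covering}): for each fixed $\phi\in(0,\pi/2)$ there exists a covering of $S^{N-1}$ by at most $p(N)/\sigma(C_\phi)$ caps of angular radius $\phi$, where $\sigma$ denotes the normalized surface area on $S^{N-1}$, $C_\phi$ is any such cap, and $p(N)$ is polynomial in $N$ (in Rogers' theorem one may take $p(N)=O(N\ln N)$). The centers of this covering serve as $\mathcal{S}_N$.

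Third, I would estimate the cap measure: $\sigma(C_\phi)=c_N\int_0^\phi\sin^{N-2}(t)\,dt$ with $c_N=\Theta(\sqrt{N})$; by Laplace's method (or a simple integration by parts) the integrand concentrates near the upper endpoint for fixed $\phi<\pi/2$, giving $\sigma(C_\phi)=\Theta(\sin^{N-1}(\phi)/\sqrt{N})$. Combining yields $|\mathcal{S}_N|\le\tilde p(N)\,\sin^{-(N-1)}(\phi)$, and the polynomial prefactor is absorbed into a subexponential perturbation of $\phi$ inside the sine (since $\sin(\phi-\epsilon_N)=\sin\phi\cdot(1-\Theta(\epsilon_N))$ and $(1-\Theta(\epsilon_N))^{-N}$ remains polynomial in $N$ for $\epsilon_N=\Theta(\ln N/N)$), yielding the desired $\sin^{-N}(\phi+o(1))$.

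The main obstacle is obtaining the sharp exponential base $\sin^{-1}\phi$ and not the weaker $\sin^{-1}(\phi/2)$ that the elementary greedy argument produces: a maximal $\phi$-separated set on $S^{N-1}$ is automatically a $\phi$-covering, but since only the caps of angular radius $\phi/2$ around its points are disjoint, its cardinality is controlled by $\sigma(C_{\phi/2})^{-1}$, which gives the wrong exponential rate. The Rogers/Böröczky sharp covering-density bound (or, as an alternative, a first-moment probabilistic construction sampling $K$ i.i.d.\ uniform points on $S^{N-1}$ and applying a Chernoff estimate together with a union bound over an auxiliary finer net) is what bridges this gap.
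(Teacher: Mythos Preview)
Your proposal is correct and follows exactly the approach the paper indicates: the paper does not give a proof beyond the remark that the lemma ``follows from sphere covering (e.g.\ \cite{boroczky2003covering}),'' and your dualization of the containment to a spherical covering condition together with the Rogers/B\"or\"oczky density bound and the Laplace estimate for $\sigma(C_\phi)$ is precisely the argument that citation encodes. Your discussion of why the greedy $\phi$-separated set only yields base $\sin^{-1}(\phi/2)$ is an accurate and useful addition.
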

Now define the functions $\theta_0(\tau)$ and $\tau_0(\theta)$ by the equations
$\tau=\frac1{\cos\phi}$ and
$\theta=\ln\frac1{\sin\phi}$; explicitly,
\begin{align}
\theta_0(\tau)&=\ln\sqrt{1-\tau^{-2}};
\\
\tau_0(\theta)&=\frac1{\sqrt{1-e^{-2\theta}}}
\end{align}
\begin{thm}\label{thm2}
If $P_Y$ is the standard Gaussian distribution on $\mathbb{R}^n$, Then the bounds in Theorem~\ref{thm_main},
Corollary~\ref{cor3}, and Corollary~\ref{cor4} can be improved by replacing $\theta(\tau)$ and $\tau(\theta)$ with $\theta_0(\tau)$ and $\tau_0(\theta)$.
Moreover the left side of \eqref{e22} can be improved to $\inf_{P_{\bar{Z}Z}\colon \|Z-\bar{Z}\|^2\le\tau^2l^2n/4}I(\bar{Z};Z)$.
\end{thm}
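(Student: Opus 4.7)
The plan is to rerun the proof of Theorem~\ref{thm_main} with two Gaussian-specific refinements: replace Barvinok's thrifty approximation (Lemma~\ref{lem_barvinok}) with the sharper sphere-covering Lemma~\ref{lem5}, and exploit the fact that the relevant convex body is Euclidean so the Minkowski functional reduces to the Euclidean norm.

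For the covering step, I would take as surrogate convex body $\mathcal{C}=\sqrt{nN}\,B_1^{nN}$. Fix $\tau>1$, set $\phi=\arccos(1/\tau)$ so that $\theta_0(\tau)=\ln(1/\sin\phi)$, apply Lemma~\ref{lem5} to $B_1^{nN}$, and rescale by $\sqrt{nN}$. This produces $\mathcal{S}\subseteq\mathcal{C}$ with $\ln|\mathcal{S}|=nN\,\theta_0(\tau)+o(N)$ and $\tfrac{1}{\cos\phi}\conv(\mathcal{S})\supseteq\mathcal{C}$, so by polar duality $\mathcal{S}^\circ\subseteq\tau\,\mathcal{C}^\circ$---the analog of \eqref{e27}. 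Taking $\mathcal{S}$ to be rotationally symmetric (e.g.\ by randomizing so that the block marginals of $\hat Y^N$ are asymptotically $N(0,I_n)$) ensures $\mathbb{E}[\tilde\rho(\hat Y^N)]\le N\,\mathbb{E}[\rho(Y)]+o(N)$ via the Donsker--Varadhan argument of \eqref{e25}--\eqref{e_25} (noting also that the standard Gaussian $Y^N$ concentrates near the sphere $\sqrt{nN}\,S_1^{nN}$, so that the same bound holds for the full Gaussian up to $o(N)$). The remainder of the argument---the Markov construction of $\mathcal{B}$, the bound \eqref{e31}, and the Pajor packing bound \eqref{e32}---then carries over verbatim with $\theta_0(\tau)$ in place of $\theta(\tau)$.

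For the refinement of the left-hand side, the key observation is that $\mathcal{C}^\circ=\tfrac{1}{\sqrt{nN}}\,B_1^{nN}$ is itself a Euclidean ball, so the neighborhood $(z^N+\tfrac{\tau nNl}{2}\,\mathcal{C}^\circ)\cap\mathcal{A}$ from \eqref{e33} is now the Euclidean ball of radius $\tau l\sqrt{nN}/2$ around $z^N$ intersected with the $P_Z$-type class. By the method of types, its log-cardinality is $N\,\sup H(\bar Z|Z)+o(N)$, where the supremum runs over $P_{\bar{Z}Z}\in\Pi(P_Z,P_Z)$ with $\mathbb{E}[\|\bar{Z}-Z\|^2]\le\tau^2 l^2 n/4$. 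Substituting into the analog of \eqref{e35} produces the improved lower bound $\inf I(\bar{Z};Z)$ under the same Euclidean constraint in \eqref{e22}, and Corollaries~\ref{cor3} and~\ref{cor4} inherit the improvement because they are algebraic consequences of \eqref{e22}. The main delicate point is the Gaussian analog of the permutation-invariance argument used in \eqref{e31}: a generic deterministic sphere cover need not have Gaussian-like block marginals, so some randomization or symmetrization of the cover is required to transfer the soft-max bound from the full Gaussian $Y^N$ to the discrete $\hat Y^N$; once this is built in, the rest is a transcription of the proof of Theorem~\ref{thm_main}.
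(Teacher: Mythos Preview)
Your plan coincides with the paper's: swap Barvinok's lemma for the sphere-covering Lemma~\ref{lem5}, take $\mathcal{C}$ to be a centered Euclidean ball of radius $r_N\sim\sqrt{nN}$ so that $\mathcal{C}^\circ$ is again a ball, and read off the $\ell_2$ coupling constraint from \eqref{e33}. The packing and Markov steps carry over verbatim, exactly as you say.

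The one place where your sketch is incomplete is precisely the step you flag as delicate, and your proposed fix is not quite the right mechanism. In Theorem~\ref{thm_main} the passage to $\hat Y^N$ in \eqref{e31} used that $\tilde\rho$ is constant on the type class; on a sphere this fails, and you propose to ``randomize so that the block marginals of $\hat Y^N$ are asymptotically $N(0,I_n)$'' and then invoke the Donsker--Varadhan computation \eqref{e25}--\eqref{e_25}. That route is awkward: a randomly rotated sphere point is not a product law in the $n$-blocks, so \eqref{e25} does not apply to it, and after randomizing you still have to return to a single deterministic $\mathcal{S}$ in order to run the Pajor bound \eqref{e32} and the set construction of $\mathcal{B}$. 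The paper does this in two clean moves. First, let $\tilde Y^N:=\tfrac{r_N}{\|Y^N\|}Y^N$; then $\mathbb{E}[Y^N\mid \tilde Y^N]=\tilde Y^N$, and convexity of $\tilde\rho$ plus Jensen give $\mathbb{E}[\tilde\rho(\tilde Y^N)]\le\mathbb{E}[\tilde\rho(Y^N)]\le N\,\mathbb{E}[\rho(Y)]+o(N)$. Second, if $U$ is Haar on the orthogonal group and $\hat Y^N$ follows the cone volume measure on $\mathcal{S}$, then $U\hat Y^N\stackrel{d}{=}\tilde Y^N$, so $\mathbb{E}[\tilde\rho(\tilde Y^N)]=\mathbb{E}_U\mathbb{E}[\tilde\rho(U\hat Y^N)]$ and some deterministic rotation $u$ achieves $\mathbb{E}[\tilde\rho(u\hat Y^N)]\le\mathbb{E}[\tilde\rho(\tilde Y^N)]$; replace $\mathcal{S}$ by $u\mathcal{S}$. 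This yields $\mathbb{E}[\tilde\rho(\hat Y^N)]\le N\,\mathbb{E}[\rho(Y)]+o(N)$ without any appeal to block-marginal approximations, and the remainder is indeed a transcription of Theorem~\ref{thm_main} with $\theta_0$ in place of $\theta$.
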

\begin{proof}
The proof is similar to the general non-Gaussian case,
and we shall only mention a few differences in the argument.
It suffices to consider $P_Z$ supported on a finite set with all probability masses equal to rational numbers.
Let $N>0$ divide the denominators of these rational numbers. 
Define $\mathcal{A}$, $P_{\tilde{Z}^N}$, and $\tilde{\rho}$ as in the proof of Theorem~\ref{thm_main}.
Then we have
\begin{align}
h(\tilde{Z}^N)
&=Nh(Z)+o(N),
\\
\mathbb{E}[\tilde{\rho}(Y^N)]
&=
\mathbb{E}\left[\ln\int_{\mathcal{A}} e^{\left<Y^N,z^N\right>}d {P_Z}^{\otimes N}(z^N)\right]+o(N)
\nonumber\\
&\le \mathbb{E}\left[\ln\int e^{\left<Y^N,z^N\right>}d {P_Z}^{\otimes N}(z^N)\right]+o(N)
\nonumber\\
&=N\mathbb{E}[\rho(G)]+o(N).
\end{align}
Define $r_N:=\mathbb{E}[\|Y^N\|]$ and let $\tilde{Y}^N:=\frac{r_N}{\|Y^N\|}Y^N$.
Note that $\mathbb{E}[Y^N|\tilde{Y}^N]=\tilde{Y}^N$, and therefore by Jensen's inequality,
\begin{align}
\mathbb{E}[\tilde{\rho}(\tilde{Y}^N)]\le 
\mathbb{E}[\tilde{\rho}(Y^N)].
\end{align}
Choose $\mathcal{S}$ similar to before but use Lemma~\ref{lem5} instead and replace $\theta(\tau)$ in \eqref{e13} with $\theta_0(\tau)$.
Define $\hat{Y}^N$ as the random variable distributed on $\mathcal{S}$ and following the cone volume measure, and let $U$ be a random rotation in $\mathbb{R}^{nN}$, independent of $\hat{Y}^N$ and following the uniform distribution on the orthogonal group.
Then
$
\mathbb{E}[\tilde{\rho}(\tilde{Y}^N)]=
\mathbb{E}[\tilde{\rho}(U\hat{Y}^N)$.
There exists some (deterministic) rotation $u$ such that $\mathbb{E}[\tilde{\rho}(U\hat{Y}^N)]\ge \mathbb{E}[\tilde{\rho}(u\hat{Y}^N)]$, which we can assume without loss of generality to be the identity, so that 
\begin{align}
\mathbb{E}[\tilde{\rho}(\tilde{Y}^N)]
\ge \mathbb{E}[\tilde{\rho}(\hat{Y}^N)].
\end{align}
There rest of the proof is similar to Theorem~\ref{thm_main}, where $\mathcal{C}$ is now the centered sphere of radius $r_N=\sqrt{N}(1+o(1))$.
The improved estimate on the left side of \eqref{e22} is seen by refining \eqref{e33} for $y^N$ in a ball.
\end{proof}
\begin{rem}
The bounds claimed in Theorem~\ref{thm2} are  asymptotically tight (as $n\to\infty$) when $P_Z$ is uniform on a ball.
\end{rem}

\section{Spiked Tensor Model}
In this section we explore bounds in the form \eqref{e3} when $Z=X^{\otimes d}$ has a special rank-1 tensor structure, and the implications for the spiked tensor detection problem in statistics \cite{perry2018optimality}\cite{chen2019phase}.
The order $d$ tensors associated with $R^n$ is again a vector space, and can be given an inner product compatible with the Frobenius norm.
The dimension of order $d$ tensors is $n^d$ and order $d$ symmetric tensors is ${n+d-1}\choose{d}$, both too large for directly applying a dimension depending minoration such as Lemma~\ref{lem_pajor} for tight bounds.
As mentioned in Remark~\ref{rem_dred}, a dimension reduction argument may be applied. 
In this section, we shall just focus on the case of Gaussian $P_Y$ where we can apply a Gaussian comparison argument,
which reduces to the random energy model (REM).
The result we will use is
(see \cite[p150]{mezard2009information})
\begin{align}
\lim_{M\to\infty}\frac1{M}\mathbb{E}
\left[
\ln\sum_{j=1}^{2^M}
e^{-\beta E_j}
\right]
=\left\{
\begin{array}{cc}
\frac{\beta^2}{4}+\ln2 &(\beta <2\sqrt{\ln 2})
\\
\beta\sqrt{\ln 2} &(\beta\ge 2\sqrt{\ln 2})
\end{array}
\right.
\label{e_rem}
\end{align}
where $E_j\sim \mathcal{N}(0,M/2)$, $j=1,\dots,2^M$ are independent.

We will lower bound the soft-max (free energy) when $X\in\mathbb{R}^n$  follows the equiprobable distribution on a type class;
once this setting is understood, the free energy for other permutation invariant $P_X$ (such as i.i.d.\ coordinates) can be estimated using standard method of types and large deviation analysis.
\begin{thm}\label{thm4}
Let $P_{\sf X}$ be a  distribution on $\mathbb{R}$ with finite support, and $\sqrt{n}X=\sqrt{n}({\sf X}_1,\dots,{\sf X}_n)$ be equiprobably distributed on the $P_{\sf X}$-type class (with rounding if necessary).
Define $Z=\sqrt{\frac{n}{2}}\lambda X^{\otimes d}$ and 
\begin{align}
\rho(y):=\ln\mathbb{E}[e^{\left<y,Z\right>}]
\end{align}
for any order $d$ tensor $y\in\mathbb{R}^{n^d}$,
and let
\begin{align}
I_{\epsilon}:=\inf_{\substack{\text{$P_{\sf X\bar{X}}\in\Pi(P_{\sf X},P_{\sf X})$}\\
\text{$\mathbb{E}^d[{\sf X\bar{X}}]\ge \mathbb{E}^d[|{\sf X}|^2]-\epsilon^2/2$}}}I({\sf X;\bar{X}}).
\label{e_je}
\end{align}
Then 
\begin{align}
\lim_{n\to\infty}\frac1{n}\mathbb{E}[\rho(G)]
\ge
\sup_{\epsilon>0}\left\{
\begin{array}{cc}
\frac{\lambda^2\epsilon^2}{8}&
(\lambda^2\epsilon^2<8I_{\epsilon})\\
\lambda\epsilon\sqrt{\frac1{2}I_{\epsilon}}-I_{\epsilon}
&(\lambda^2\epsilon^2\ge 8I_{\epsilon})
\end{array}
\right.
\end{align}
where $G\in\mathbb{R}^{n^d}$ is an order $d$ tensor with i.i.d.\ standard Gaussian entries.
\end{thm}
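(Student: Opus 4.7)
The plan is to combine a clustering of the $P_{\sf X}$-type class with Jensen's inequality within clusters, Gaussian comparison, and the REM formula~\eqref{e_rem}. Merely restricting the sum in $\rho(G)$ to a packing $\mathcal{B}$ would yield only the prefactor $-\ln|\mathcal{T}|=-nH({\sf X})$, which is too negative by $H({\sf X})-I_\epsilon$; the key is to retain every element of the type class and use Jensen's inequality to collapse each cluster into its representative, so that the prefactor becomes $-\ln|\mathcal{B}|\approx -nI_\epsilon$.

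First, via the method of types I would construct a set $\mathcal{B}$ inside the $P_{\sf X}$-type class $\mathcal{T}$ together with an essentially balanced partition $\mathcal{T}=\sqcup_{\bar x\in\mathcal{B}} S_{\bar x}$, where $|\mathcal{B}|=e^{nI_\epsilon+o(n)}$ and $|S_{\bar x}|=e^{n(H({\sf X})-I_\epsilon)+o(n)}$, every $x\in S_{\bar x}$ obeys $\langle x,\bar x\rangle^d\ge \|X\|^{2d}-\epsilon^2/2$, and distinct centers obey $\langle \bar x,\bar x'\rangle^d\le \|X\|^{2d}-\epsilon^2/2$. A greedy Gilbert--Varshamov construction together with large-deviation estimates on joint-type sizes supplies this structure. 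Conditioning on the cluster index $\bar X\in\mathcal{B}$ (uniform on $\mathcal{B}$ by balancing) and applying Jensen's inequality to $\exp$ yields $\mathbb{E}[e^{\langle G,Z\rangle}\mid\bar X=\bar x]\ge e^{\sqrt{n/2}\,\lambda\langle G,m_{\bar x}\rangle}$ with $m_{\bar x}:=|S_{\bar x}|^{-1}\sum_{x\in S_{\bar x}} x^{\otimes d}$, so that taking $\ln\mathbb{E}_{\bar X}$ gives $\mathbb{E}[\rho(G)]\ge -nI_\epsilon+\mathbb{E}\bigl[\ln\sum_{\bar x\in\mathcal{B}} e^{\sqrt{n/2}\,\lambda\langle G,m_{\bar x}\rangle}\bigr]+o(n)$.

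Second, the centered Gaussian vector $(\sqrt{n/2}\,\lambda\langle G,m_{\bar x}\rangle)_{\bar x\in\mathcal{B}}$ has variances close to $v:=(n/2)\lambda^2\|X\|^{2d}$ and pairwise covariances at most $v-n\lambda^2\epsilon^2/4$, inherited (up to lower-order corrections) from the $\epsilon$-separation of the cluster centers. Kahane's Gaussian comparison for the convex functional $\ln\sum e^\cdot$ lets me replace this vector by one with \emph{constant} off-diagonal covariance $v-n\lambda^2\epsilon^2/4$, which decomposes as a common Gaussian shift (vanishing in expectation) plus i.i.d.\ Gaussian increments $W_{\bar x}\sim\mathcal{N}(0,n\lambda^2\epsilon^2/4)$. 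Applying the REM formula~\eqref{e_rem} with $M=nI_\epsilon/\ln 2$ and $\beta=\sqrt{\lambda^2\epsilon^2\ln 2/(2I_\epsilon)}$, so that the phase boundary $\beta=2\sqrt{\ln 2}$ coincides with $\lambda^2\epsilon^2=8I_\epsilon$, yields $n^{-1}\mathbb{E}[\ln\sum_{\bar x} e^{W_{\bar x}}]\to \lambda^2\epsilon^2/8+I_\epsilon$ in case~1 and $\lambda\epsilon\sqrt{I_\epsilon/2}$ in case~2. Substituting back cancels the $+I_\epsilon$ in case~1 and produces the $-I_\epsilon$ of case~2, and taking $\sup_{\epsilon>0}$ concludes the proof.

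The main obstacle is the clustering step: producing a partition of the type class that is simultaneously rate-optimal ($|\mathcal{B}|\approx e^{nI_\epsilon}$), balanced (equal cluster sizes up to $o(n)$ in the exponent), and both covering and packing at matching exponents. A related technicality in the Gaussian-comparison step is the transfer of the $\epsilon$-separation of the centers $\bar x$ to the averaged tensors $m_{\bar x}$, where triangle-type perturbations of order $\epsilon\|X\|^d$ enter the off-diagonal covariances and must be controlled, possibly by enlarging the packing radius by $o(1)$ so that the slack is absorbed into the $o(n)$ corrections. This is precisely the reason this section specializes to Gaussian $P_Y$ and invokes Kahane's inequality and the REM formula rather than the thrifty-approximation machinery of Section~\ref{sec_general}.
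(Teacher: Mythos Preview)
Your high-level plan (Jensen, then Gaussian comparison, then the REM formula \eqref{e_rem}) matches the paper's, but the Jensen step is organized differently and that difference matters.

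The paper does not cluster. It builds a \emph{random} maximal $r$-packing $\hat\nu$ of the support $\mathcal{A}$ of $P_Z$ (pick points uniformly at random, each time restricted to the complement of the $r$-balls around previously chosen points, until exhaustion). By the permutation symmetry of the type class, $\mathbb{E}[\hat\nu]=P_Z$ exactly, so Jensen for the concave $\ln$ gives $\mathbb{E}[\rho(G)]\ge\mathbb{E}_{\hat\nu,G}\bigl[\ln\!\int e^{\langle G,z\rangle}\,d\hat\nu(z)\bigr]$. For every realization, the support of $\hat\nu$ is a genuine $r$-packing of size at least $k_{\epsilon,n}=e^{nI_\epsilon+o(n)}$, and its atoms are \emph{actual} type-class tensors, so all norms equal $\|X\|^{2d}$ and all pairwise separations are $\ge r^2$; Slepian/Kahane against the REM is then immediate. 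No balanced partition is needed, and no averaged tensors appear.

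Your route replaces this by a deterministic balanced partition and Jensen for the convex $\exp$ \emph{inside} each cluster, producing the averaged tensors $m_{\bar x}=\mathbb{E}[x^{\otimes d}\mid x\in S_{\bar x}]$. Here is the concrete gap. Each $x\in S_{\bar x}$ satisfies $\|x^{\otimes d}-\bar x^{\otimes d}\|\le\epsilon$, so $\|m_{\bar x}-\bar x^{\otimes d}\|\le\epsilon$. Using the partition property one does get $\langle\bar x^{\otimes d},m_{\bar x'}\rangle\le\|X\|^{2d}-\epsilon^2/2$, but the remaining cross term $\langle m_{\bar x}-\bar x^{\otimes d},m_{\bar x'}\rangle$ is only bounded by $\epsilon\,\|X\|^{d}$ via Cauchy--Schwarz. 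Hence the off-diagonal covariance of $(\langle G,m_{\bar x}\rangle)_{\bar x}$ is controlled only up to an additive $O(\epsilon\,\|X\|^d)$, which dominates the target gap $\epsilon^2/2$; the Kahane comparison then yields no useful lower bound. Your proposed remedy of ``enlarging the packing radius by $o(1)$'' cannot absorb an $O(\epsilon)$ pollution. Decoupling the packing radius $\epsilon_1$ from the cluster radius $\epsilon_2$ with $\epsilon_2\ll\epsilon_1^2$ would tame the cross term, but then $|\mathcal{B}|\cdot|S_{\bar x}|=e^{n(I_{\epsilon_1}+H-I_{\epsilon_2})}\ll e^{nH}$ (since $I_\epsilon$ is nonincreasing in $\epsilon$), so the clusters cannot cover a positive fraction of the type class and the prefactor argument collapses.

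In short, the obstacle is not the existence of the balanced partition you flag, but the covariance control for the \emph{averaged} process; the paper's randomization-then-Jensen trick is precisely what lets one work with unaveraged type-class points and avoid this issue altogether.
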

\begin{proof}
Set $r=\sqrt{\frac{n}{2}}\lambda\epsilon$, where $\epsilon>0$ does not depend on $n$.
Define
\begin{align}
k_{\epsilon,n}:=
\mathbb{E}^{-1}[P_{Z}
(Z+B(r))],
\end{align}
where $B(r)$ denotes the centered ball of radius $r$ in the space of tensors under the Frobenius norm.
Let $\sqrt{n}x$ and $\sqrt{n}\bar{x}$ be two sequences in the $P_{\sf X}$-type class, and define $z$ and $\bar{z}$ accordingly.
Let $t$ be the joint type of $(x,\bar{x})$.
Then
\begin{align}
\|z-\bar{z}\|^2
&=\tfrac{n\lambda^2}{2}
\|x^{\otimes d}-\bar{x}^{\otimes d}\|^2
=n\lambda^2(\mathbb{E}_{P_{\sf X}}^d[{\sf X}^2]-\mathbb{E}_t^d[{\sf X\bar{X}}]),
\label{e51}
\end{align}
where $\mathbb{E}_t$ denotes the expectation under the type $t$, viewed as a distribution on $\mathbb{R}^2$.
Then by the large deviations analysis,
\begin{align}
\lim_{n\to\infty}\frac1{n}\ln k_{\epsilon,n}
&=I_{\epsilon},
\end{align}
We now generate a random measure $\hat{\nu}$ supported on $\mathcal{A}$, the support of $P_Z$. 
Select uniformly at random a point in $\mathcal{A}$, 
and then select uniformly at random the next point among all points in $\mathcal{A}$ at least $r$ away from the previously selected points, and so on, until no more points can be selected.
Let $\hat{\nu}$ be the equiprobable distribution on these selected points. 
$\hat{\nu}$ is random because of the randomness of the point selection process.
By symmetry of the type class, we see that $\mathbb{E}[\hat{\nu}]=P_Z$, so by Jensen's inequality,
\begin{align}
\mathbb{E}[\rho(G)]\ge 
\mathbb{E}\left[\ln\int e^{\left<G,z\right>}d\hat{\nu}(z)\right].
\label{e_43}
\end{align}
Since the support size of $\mu$ is at least $k_{r,N}$,
using \eqref{e_rem} and Slepian's comparison \cite{chatterjee2005error}, we can lower bound the right side of \eqref{e_43} in terms of the free energy of the REM with parameters $M,\beta$ given by 
\begin{align}
M&=\log_2 k_{r,N};\\
M\beta^2&=r^2
\end{align}
and the theorem follows by taking $n\to\infty$.
\end{proof}
While the statistical physics literature mostly focuses on $X$ equiprobable on a Boolean cube, a general $X$ is relevant for statistical applications such as spiked tensor detection \cite{perry2018optimality}\cite{chen2019phase}.
Let the noise $W\in\mathbb{R}^{n^d}$ be a tensor with  i.i.d.\ $\mathcal{N}(0,\frac{2}{n})$ entries.
Consider a hypothesis testing problem with observation 
\begin{itemize}
\item $H_0$: $T=W$;
\item $H_1$: $T=\lambda X^{\otimes d}+W$,
\end{itemize}
where $\lambda>0$ is the signal to noise ratio.
Denote by $P_{H_0}$ and $P_{H_1}$ the distributions of $T$ under $H_0$ and $H_1$, respectively.
From the Gaussian density formula it is easy to see that 
\begin{align}
D(P_{H_0}\|P_{H_1})=\frac{n\lambda^2}{4}
-\mathbb{E}\left[\ln\mathbb{E}[e^{\frac{n\lambda}{2} \left<W,X^{\otimes d}\right>}|W]\right].
\label{e_35}
\end{align}
Using concentration, it can be shown that the critical $\lambda$ for detecting a rank-1 spike with nontrivial probability coincides with the largest $\lambda$ for 
$D(P_{H_0}\|P_{H_1})=o(n)$.
Previously \cite{perry2018optimality} computed such critical $\lambda$ by bounding $D(P_{H_0}\|P_{H_1})$ with the R\'enyi divergence $D_2(P_{H_1}\|P_{H_0})$.
However when $\lambda$ is above the critical value, $D_2(P_{H_1}\|P_{H_0})$ grows super-linearly in $n$ (see \cite[Section~2.4]{perry2018optimality}) and hence does not give a useful bound for  $D(P_{H_0}\|P_{H_1})$ and hence for the free energy.
In contrast, we show that Theorem~\ref{thm4} is asymptotically tight for large $d$:
\begin{cor}\label{cor_4}
In Theorem~\ref{thm4}, suppose that $P_{\sf X}$ has unit variance.
Then
\begin{align}
\lim_{d\to\infty}\lim_{n\to\infty}\frac1{n}\mathbb{E}[\rho(G)]
=
\left\{
\begin{array}{cc}
\frac{\lambda^2}{4}&
(\lambda<2\sqrt{H})\\
\lambda\sqrt{H}-H
&(\lambda\ge 2\sqrt{H})
\end{array}
\right.
\label{e47}
\end{align}
where $H$ is the entropy of $P_{\sf X}$.
In particular, if $\lambda\ge 2\sqrt{H}$ and the type~I error in spiked tensor detection is bounded away from 0 and 1, then the optimal type~II exponent converges to $(\lambda/2-\sqrt{H})^2$ as $d\to\infty$.
\end{cor}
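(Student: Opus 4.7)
The plan is to establish matching upper and lower bounds on $\lim_{d\to\infty}\lim_{n\to\infty}\frac{1}{n}\mathbb{E}[\rho(G)]$ and then convert the free energy into a detection exponent via \eqref{e_35}. The $\ge$ direction will come directly from Theorem~\ref{thm4}, and the $\le$ direction will use a Gaussian comparison that reduces the problem to the random energy model \eqref{e_rem}.

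For the lower bound, I apply Theorem~\ref{thm4} and analyze $I_\epsilon$ in the limit $d\to\infty$. With $\mathbb{E}[{\sf X}^2]=1$, the feasibility constraint $(\mathbb{E}[{\sf X}\bar X])^d \ge 1-\epsilon^2/2$ forces $|\mathbb{E}[{\sf X}\bar X]| \ge (1-\epsilon^2/2)^{1/d} \to 1$ for each fixed $\epsilon\in(0,\sqrt{2})$; combined with $\mathbb{E}[\bar X^2]=1$, this gives $\mathbb{E}[({\sf X}\mp\bar X)^2] \to 0$, and since $P_{\sf X}$ has finite support, $\bar X$ becomes a deterministic function of ${\sf X}$ in the limit, so $I_\epsilon \to H({\sf X})=H$. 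Both branches of the Theorem~\ref{thm4} bound are increasing in $\epsilon$, so the supremum is attained as $\epsilon\to\sqrt{2}^-$, yielding $\lambda^2/4$ on the first branch (when $\lambda<2\sqrt{H}$) and $\lambda\sqrt{H}-H$ on the second branch (when $\lambda\ge 2\sqrt{H}$), exactly matching the claimed formula.

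For the matching upper bound, I view $F_x:=\sqrt{n/2}\,\lambda\langle G,x^{\otimes d}\rangle$ as a centered Gaussian process indexed by the $P_{\sf X}$-type class $T_n$ (with $\frac{1}{n}\ln|T_n|\to H$), having variance $n\lambda^2/2$ and covariance $\tfrac{n\lambda^2}{2}\langle x,x'\rangle^d$. Let $F'_x$ be i.i.d.\ copies with the same variance. Since the convex functional $\phi(u)=\ln\sum_x e^{u_x}$ has non-positive off-diagonal Hessian entries $-\mu_x\mu_{x'}$, Kahane's Gaussian interpolation identity compares $\mathbb{E}[\phi(F)]$ and $\mathbb{E}[\phi(F')]$ in terms of the off-diagonal covariances $\tfrac{n\lambda^2}{2}\langle x,x'\rangle^d$; for even $d$ these are all non-negative, so the comparison yields $\mathbb{E}[\phi(F)]\le\mathbb{E}[\phi(F')]$ directly. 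Applying the REM formula~\eqref{e_rem} to the i.i.d.\ case (with $N=e^{nH+o(n)}$ Gaussians of variance $n\lambda^2/2$) and subtracting $\frac{1}{n}\ln|T_n|\to H$ recovers exactly $\lambda^2/4$ when $\lambda<2\sqrt{H}$ and $\lambda\sqrt{H}-H$ when $\lambda\ge 2\sqrt{H}$. The main technical hurdle is that for odd $d$ the off-diagonal covariances can be negative, so one must either restrict $d$ to even integers (sufficient since the limit is $d\to\infty$) or control the interpolation-error term by exploiting $|\langle x,x'\rangle|^d\to 0$ pointwise together with a uniform tail estimate on the Gibbs overlap distribution.

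For the detection consequence, \eqref{e_35} gives $\frac{1}{n}D(P_{H_0}\|P_{H_1}) = \frac{\lambda^2}{4}-\frac{1}{n}\mathbb{E}[\rho(G)]$, so in the regime $\lambda\ge 2\sqrt{H}$ the just-proved free-energy limit yields $\lim_d\lim_n \frac{1}{n}D(P_{H_0}\|P_{H_1})=(\lambda/2-\sqrt{H})^2$. A standard Neyman-Pearson/Stein argument, relying on Gaussian concentration of the Lipschitz function $\rho(G)$ to concentrate the log-likelihood ratio under $P_{H_0}$ around its mean, then identifies this KL divergence with the optimal type~II exponent whenever the type~I error is bounded away from $0$ and $1$, completing the proof.
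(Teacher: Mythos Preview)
Your lower bound argument matches the paper's: both invoke Theorem~\ref{thm4} and let $\epsilon\uparrow\sqrt{2}$ after observing that $I_\epsilon\to H$ as $d\to\infty$.

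For the upper bound, you and the paper take genuinely different routes. The paper does not use Gaussian comparison to REM. Instead it argues through the testing problem: it considers the maximum-likelihood statistic $m=\max_v\langle T,v^{\otimes d}\rangle$ over the type class, uses the union bound of \cite[Proposition~4.1]{perry2018optimality} to show that the threshold $m_n$ for which $P_{H_0}(m\le m_n)=\tfrac12$ satisfies $\limsup m_n\le 2\sqrt{H}$, observes that under $H_1$ one has $m\ge\lambda+\langle W,X^{\otimes d}\rangle$ with $\langle W,X^{\otimes d}\rangle\sim\mathcal{N}(0,2/n)$, and then applies the data-processing inequality $D(P_{H_0}\|P_{H_1})\ge d\bigl(P_{H_0}(\mathcal{E}_n)\,\|\,P_{H_1}(\mathcal{E}_n)\bigr)$ to obtain $\liminf_n\frac{1}{n}D\ge(\lambda/2-\sqrt{H})^2$, which via \eqref{e_35} is exactly the upper bound on the free energy. (For $\lambda<2\sqrt{H}$ the paper just uses $D\ge0$.) This route simultaneously delivers the detection exponent and is insensitive to the parity of $d$.

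Your Kahane/Slepian reduction to REM is natural and produces the same numbers, but one point needs correction: restricting to even $d$ is \emph{not} ``sufficient since the limit is $d\to\infty$.'' That only establishes the limit along even $d$; combined with your all-$d$ lower bound you would still owe $\limsup_{d\ \mathrm{odd}}\le$ RHS. Your alternative fix---controlling the interpolation remainder via the overlap distribution---is the right idea and is indeed how the $d$-spin$\to$REM convergence is classically proved, but it is not a one-liner: one must show the interpolated Gibbs measure puts vanishing mass on overlaps bounded away from $\{0,\pm1\}$ and handle the near-$\pm1$ overlaps separately. The paper's detour through the hypothesis test sidesteps this entirely, at the cost of importing the union-bound estimate from \cite{perry2018optimality}.
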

\begin{proof}
For any $\epsilon\in (0,\sqrt{2})$, $I_{\epsilon}$ in Theorem~\ref{thm4} converges to $H$ as $d\to\infty$. Taking $\epsilon\uparrow\sqrt{2}$ proves the $\ge$ part. 
To see the $\le$ part, we follow \cite{perry2018optimality} and consider the maximum likelihood statistic
\begin{align}
m:=\max_v\left<T,v^{\otimes d}\right>
\end{align}
where the maximum is over $v$ such that $\sqrt{n}v$ is $P_{\sf X}$-typical.
Let $\mathcal{E}_n$ be the event that $m\le m_n$, where $m_n$ is defined as the number such that $P_{H_0}(\mathcal{E}_n)=\frac1{2}$.
By the union bound calculation in \cite[Proposition 4.1]{perry2018optimality}, we have $\lim_{n\to\infty}m_n\le 2\sqrt{H}$.
Then 
\begin{align}
P_{H_1}(\mathcal{E}_n)
&\le P_{H_1}(\left<T,X^{\otimes d}\right>\le m_n)
\\
&=P_{H_1}(\lambda+\left<W,X^{\otimes d}\right>\le m_n).
\end{align}
Note that $\left<W,X^{\otimes d}\right>$ follows $\mathcal{N}(0,\frac{2}{n})$.
If $\lambda^2\le 2\sqrt{H}$, we have
\begin{align}
\lim_{n\to\infty}\frac1{n}\ln P_{H_1}(\mathcal{E}_n)\le -(\tfrac{\lambda}{2}-\sqrt{H})^2.
\end{align}
By the data processing inequality,
\begin{align}
\lim_{n\to\infty}\frac1{n}D(P_{H_0}\|P_{H_1})
&\ge \lim_{n\to\infty}\frac1{n}d(P_{H_0}(\mathcal{E}_n)\|P_{H_1}(\mathcal{E}_n))
\\
&\ge (\tfrac{\lambda}{2}-\sqrt{H})^2
\end{align}
where $d(\cdot\|\cdot)$ denotes the binary divergence function. Then from \eqref{e_35} we have shown the $\le$ part of \eqref{e47} in the case of $\lambda\ge2\sqrt{H}$. 
The $\le$ part in the case of $\lambda<2\sqrt{H}$ is trivial from $D(P_{H_0}\|P_{H_1})\ge 0$.
\end{proof}
\begin{rem}
Results related to Corollary~6 have appeared in the literature: as mentioned,
\cite{perry2018optimality} performed 2-R\'enyi divergence calculations to show that the critical $\lambda$ converges to $2\sqrt{H}$ as $d\to\infty$.
The 2-R\'enyi divergence is equivalent to the expected partition function of 2-replica systems.
For $X$ equiprobable on the hypercube, a classical replica symmetry calculation (see e.g.\ \cite{mezard2009information}) shows that the free energy of the $d$-spin model converges to the free energy of the REM as $d\to\infty$.
\end{rem}

\section{acknowledgement}
The author would like to thank Qiang Wu for discussions on the SK model and tensor detection.

%
%
%

\bibliographystyle{plainurl}
\bibliography{ref_minorization}
\end{document}